\theoremstyle{plain}% default
\newtheorem{thm}{Theorem}[section]
\newtheorem{theorem}[thm]{Theorem}
\newtheorem{definition}[thm]{Definition}
\newtheorem{lemma}[thm]{Lemma}
\newtheorem{claim}[thm]{Claim}
\theoremstyle{definition}
\DeclareMathOperator*{\E}{\mathbb{E}}
\newcommand{\eps}{\varepsilon}
\begin{document}
%%%%%%%%%%%%%%%%%%%%%%%%%%%%%%%%%%%%%%%%%%%%

\title{A quantum lower bound for distinguishing random functions from random permutations}
\author{
Henry Yuen\thanks{Email: \texttt{\href{mailto:hyuen@mit.edu}{\color{black}hyuen@mit.edu}}. Supported by an NSF Graduate Fellowship Grant No. 1122374 and National Science Foundation Grant No. 1218547.} \\
%\texttt{\href{mailto:hyuen@mit.edu}{\color{black}hyuen@mit.edu}} \\
MIT
}

\maketitle

\newcommand{\dist}{\mathcal{D}}
\newcommand{\func}{\mathcal{F}}
\newcommand{\perm}{\mathcal{P}}
\newcommand{\query}{\mathcal{Q}}
\newcommand{\polylog}{\mathrm{polylog}\, }
\newcommand{\hybrid}{\mathcal{H}}

\begin{abstract}
The problem of distinguishing between a random function and a random permutation on a domain of size $N$ is important in theoretical cryptography, where the security of many primitives depend on the problem's hardness. We study the quantum query complexity of this problem, and show that any quantum algorithm that solves this problem with bounded error must make $\Omega(N^{1/5}/\polylog N)$ queries to the input function. Our lower bound proof uses a combination of the Collision Problem lower bound and Ambainis's adversary theorem.
\end{abstract}

\section{Introduction}

We study the quantum query complexity of distinguishing between a random function and a random permutation on $N$ elements, or the \emph{RP-RF problem} (where RP stands for ``random permutation'' and RF stands for ``random function''). More precisely:
\medskip

\noindent \textbf{RP-RF problem}. Let $\dist_\perm$ denote the uniform distribution over bijective functions mapping $[N]$ to $[N]$ -- i.e. the uniform distribution over all permutations. Let $\dist_\func$ denote the uniform distribution over functions mapping $[N]$ to $[N]$.  We say an algorithm $A$ (classical or quantum) \textbf{solves the RP-RF problem on average with bias $\epsilon$}, iff
$$
	\left | \Pr_{f\sim \dist_\func} (A(f) = 1) - \Pr_{f \sim \dist_\perm} (A(f) = 1) \right | > \epsilon,
$$
where the probability is both over the distribution $\dist_\func$ or $\dist_\perm$, and the internal randomness of the algorithm $A$. We note that solving this problem is equivalent to detecting a collision in a  random function mapping $N$ elements to $N$ elements.

We work with the \emph{query model} of quantum algorithms, where the pertinent complexity measure of an algorithm is the number of queries made to the algorithm's input. As is often the case in the study of quantum query complexity, the number of computational steps is not considered. The study of quantum query complexity has been particularly fruitful. Optimal bounds on the quantum query complexity of many problems have been established, including unordered search~\cite{bennett1997strengths,grover1997quantum}, determining whether a function is $2$-to-$1$ or $1$-to-$1$ (known as the Collision Problem)~\cite{brassard1997quantum,ambainis2005polynomial}, and more. Perhaps most interestingly, there are problems for which there are exponential gaps between its classical query complexity and its quantum query complexity (e.g. Simon's Problem~\cite{simon1997power}). 

In this paper, we show that there is a quantum algorithm that solves the RP-RF problem with bias $\epsilon$ using $O(\sqrt{\epsilon} N^{1/3})$ queries, and that any such algorithm must make $\Omega(\epsilon N^{1/5}/\polylog N)$ queries. This rules out a superpolynomial speed-up from using quantum resources to solve the problem; the classical (randomized) query complexity of the RP-RF problem is $\Theta(\sqrt{\epsilon N})$~\cite{chang2008short}.

One motivation for studying the query complexity of the RP-RF problem comes from a burgeoning interest in showing that classical constructions of cryptographic primitives are secure even against quantum computers (under plausible hardness assumptions). For example, the classical Goldreich-Goldwasser-Micali construction of pseudorandom functions using pseudorandom generators~\cite{goldreich1986construct} was only recently shown to be secure against polynomial-time quantum computers by Zhandry~\cite{zhandry2012construct}. A crucial element of his result was to show that the quantum query complexity of distinguishing between functions drawn from so-called \emph{small range distributions}, and truly random functions, is large.

Zhandry posed the open question of whether one could similarly show that the classical constructions of \emph{pseudorandom permutations} (e.g., the Luby-Rackoff construction~\cite{luby1988construct}) are secure against time-bounded quantum computers~\cite{zhandry2012construct}. Pseudorandom permutations are functions on $\{0,1\}^n$ that cannot be distinguished from a truly random permutation by any algorithm running in time polynomial in $n$. They are central to the construction of block ciphers (e.g. the DES cipher): while the ideal cipher would use truly random permutations, since a truly random permutation requires an exponentially-sized description, pseudorandom permutations -- which can be specified using a polynomial number of bits -- are used instead.

An important step in the analysis of pseudorandom permutations is arguing that a classical algorithm making polynomially many queries cannot distinguish between a random function and a random permutation. If we were to try to carry the analysis over to the quantum world, this step would require that we prove a quantum query lower bound for this problem. 

Here, we show that allowing quantum queries actually does not confer a significant advantage in the number of queries needed for the RP-RF problem: we only get a modest polynomial speed-up over classical queries. This gives hope that, in a similar way to Zhandry's result, the classical constructions of pseudorandom permutations~\cite{luby1988construct,naor1999construction} can be shown to be secure against time-bounded quantum computers (assuming one way functions are secure against quantum computers).
\medskip

\noindent \textbf{Proof strategy}. The RP-RF problem is strongly reminiscent of the Collision Problem, where one has to distinguish between an $r$-to-$1$ function and a $1$-to-$1$ function, promised that one is the case. A series of works~\cite{brassard1997quantum,aaronson2004quantum,ambainis2005polynomial} established that the query complexity of the Collision Problem is $\Theta(\sqrt[3]{N/r})$.  A random function $f:[N] \to [N]$ has some predictable structure: it can be thought of as the result of an experiment where one throws $N$ labelled balls into $N$ labelled bins. The statistics of the ``balls into bins'' experiment are well studied. For example, with high probability, $f$ is $2$-to-$1$ on approximately $1/2e$ fraction of its domain, $3$-to-$1$ on approximately $1/6e$ fraction, and so on. Furthermore, with high probability, for all $y\in [N]$, $|f^{-1}(y)| \leq O(\log N/\log \log N)$. Thus, a natural inclination is to somehow embed the Collision Problem into the RP-RF problem, in order to use the Collision Problem lower bound.

We do exactly this, with some additional ideas. Using a hybrid argument, we show that any algorithm $A$ that solves the RP-RF problem with few queries must be able to distinguish between a function with an $r$-to-$1$ part, and a function without an $r$-to-$1$ part with few queries, which would contradict the Collision lower bound provided that the $r$-to-$1$ part of the function was sufficiently large. However, the Collision lower bound loses its effectiveness when this $r$-to-$1$ part is too small, so in this case we switch to using Ambainis's adversary theorem~\cite{ambainis2000quantum}. Trading off between these two lower bound techniques gives us the $\Omega(N^{1/5}/\polylog N)$ lower bound on the RP-RF problem.

Finally, it is worth noting that our lower bound also applies to the problem of \emph{detecting} a collision in a random function mapping $[N]$ to $[N]$ (i.e. finding inputs $x$, $y$ such that $f(x) = f(y)$). In fact, the query complexity of collision detection in a random function is equivalent to the RP-RF problem.

%It is worth noting that it is unlikely that our proof can be extended to obtain a lower bound better than $\Omega(N^{1/5})$. This is because of the use of (positive weight) adversary method, which is known to be useless for proving lower bounds 
\medskip

\noindent \textbf{Subsequent work}. Shortly after this work, Zhandry was able to improve our $\Omega(N^{1/5}/\polylog N)$ lower bound to the optimal $\Theta(N^{1/3})$ query lower bound for the RP-RF problem~\cite{zhandry2013note}. In fact, he shows the more general result that detecting collisions in a random function $f:[N] \to [M]$ where $M = \Omega(N^{1/2})$ requires $\Theta(N^{1/3})$ queries. He uses techniques from~\cite{zhandry2012secure,zhandry2012construct} which are based on the idea of expressing the acceptance probabilities of quantum algorithms as low degree polynomials (also known as the polynomial method in quantum computing~\cite{beals2001quantum}). In particular, he avoids the use of the adversary method, which is known to be useless for proving lower bounds on collision-type problems~\cite{hoyer2007negative} (this is also the reason why it is unlikely that our proof can be extended to obtain a lower bound better than $\Omega(N^{1/5})$).

%A key part in the analyses of pseudorandom permutations is an argument that a computationally bounded algorithm $A$ cannot distinguish between a random permutation and a random function (which then allows one to use pseudorandom functions to simulate a random permutation), given as oracles. However, these analyses crucially assume that $A$, as a classical algorithm, can only make one query to the function per time step. Since $A$ is usually assumed to run in a polynomial number of steps, it can only query a polynomial number of inputs of the oracle, and with so few queries a random permutation looks virtually identical to a random function.

%This assumption, however, breaks down when we allow $A$ to be a quantum algorithm, and make \emph{quantum queries} to the input function, which can be a quantum superposition over multiple inputs to the function. One can no longer say that $A$ only ``sees'' a few values of the function, as $A$ can obtain a superposition over all function values with a single query.

%\textbf{The proof idea}. In our lower bound proof, we ``interpolate'' between random functions and random permutations via a sequence of hybrid function classes $H_0,\ldots,H_q$ where $H_0$ consists of all permutations, $H_q$ contains a 
%
%show that any algorithm $A$ that distinguishes between a random permutation and a random function must be able to distinguish between a random permutation and a random function with a specific \emph{collision profile} $C$. A collision profile is simply a 

\section{Preliminaries}

Before analyzing the query complexity of the RP-RF problem, we introduce some terminology that will be useful later.

\begin{definition}
	Let $f:[N] \to [N]$. We say that a domain element $x\in [N]$ \textbf{has multiplicity $u$ under $f$} iff $|f^{-1}(f(x))| = u$.
\end{definition}

%In an abuse of language, we will also say that an element $x$ of the domain $[N]$ has multiplicity $u$ under $f$ iff $f(x)$ occurs with multiplicity $u$.

\begin{definition}
	Let $f:[N] \to [N]$. The \textbf{collision profile} of $f$, denoted $\mathrm{Col}(f)$, is a vector of non-negative integers $\langle b_1,\ldots,b_N\rangle$ such that there exists a partition of $[N] = S_1 \cup \cdots \cup S_N$, where for all $i$, 
	\begin{enumerate}
		\item $|S_i| = b_i$ (some $S_i$ may be empty), and
		\item For all $x\in S_i$, $x$ has multiplicity $i$ under $f$.
	\end{enumerate}
\end{definition}

For example, the collision profile of a permutation $f$ is $\mathrm{Col}(f) = \langle N, 0,\ldots,0 \rangle$. More generally, the collision profile of a $r$-to-$1$ function is the vector $N \cdot e_r$, where $e_r$ is the $r$th elementary basis vector. When we write $\dist_C$ for some collision profile $C$, we mean the uniform distribution over functions with collision profile $C$.

\newcommand{\maxload}{\mathrm{maxload}}

\begin{definition}
	Let $C = \langle b_1,\ldots,b_N \rangle$ be a collision profile for a function $f:[N] \to [N]$. The \textbf{maximum load} of $C$, denoted $\maxload(C)$, is the largest $i$ such that $b_i \neq 0$. 
\end{definition}

Let $f,g:[N] \to [N]$. We say that $f$ and $g$ are \textbf{collision equivalent} iff $\mathrm{Col}(f) = \mathrm{Col}(g)$. It is easy to see that the relation induced by collision equivalence is an equivalence relation on the set of all functions mapping $[N]$ to $[N]$. We will generally identify the equivalence classes with their corresponding collision profiles. We will abuse notation and write $f \in C$ to denote a function with collision profile $C$. For a quantum algorithm $A$, we will use $\query(A)$ to denote its quantum query complexity.

\begin{definition}[Good profile]
Let $C = \{b_1,\ldots,b_N\}$ be a collision profile for functions mapping $[N]$ to $[N]$. $C$ is \textbf{good} iff $\maxload(C) < 3 \log N/\log \log N$.
\end{definition}

\begin{claim}
\label{clm:good_profile_likely}
Let $f$ be a random function drawn from $\dist_\func$. Then $\Pr(\text{$\mathrm{Col}(f)$ is good}) \geq 1 - 1/N$.
\end{claim}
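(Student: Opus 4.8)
The plan is to recognize this as the classical maximum-load bound for the balls-into-bins experiment and to prove it by a union bound over the codomain. The first step is to observe that $\maxload(\mathrm{Col}(f)) = \max_{y \in [N]} |f^{-1}(y)|$: by definition the largest index $i$ with $b_i \neq 0$ is exactly the size of the largest preimage of $f$. So it suffices to bound the probability that \emph{some} codomain element has a large preimage.

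Next I would fix a codomain element $y \in [N]$ and note that, for $f \sim \dist_\func$, each of the $N$ domain elements maps to $y$ independently with probability $1/N$. Hence $|f^{-1}(y)|$ is distributed as a $\mathrm{Binomial}(N, 1/N)$ random variable. Setting $k = \ceil{3\log N / \log \log N}$, an elementary tail estimate gives
\[
\Pr\big(|f^{-1}(y)| \geq k\big) \;\leq\; \binom{N}{k}\left(\frac{1}{N}\right)^{k} \;\leq\; \frac{1}{k!} \;\leq\; \left(\frac{e}{k}\right)^{k},
\]
where I use $\binom{N}{k} \leq N^{k}/k!$ and the standard bound $k! \geq (k/e)^{k}$. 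Taking a union bound over the $N$ choices of $y$ then yields
\[
\Pr\big(\maxload(\mathrm{Col}(f)) \geq k\big) \;\leq\; N\left(\frac{e}{k}\right)^{k}.
\]

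The final step is to check that the right-hand side is at most $1/N$ for all sufficiently large $N$, which after rearranging and taking logarithms reduces to verifying $k \log(k/e) \geq 2\log N$. Substituting $k \approx 3\log N/\log \log N$, we have $\log(k/e) = \Theta(\log\log N)$, so that $k \log(k/e) \geq 3\log N - o(\log N) \geq 2\log N$, with room to spare. This gives $N(e/k)^{k} \leq 1/N$ and hence the claim.

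The only genuinely nontrivial point — really just careful constant bookkeeping — is this last verification: one must confirm that the factor of $3$ in the threshold is large enough to absorb both the Euler-constant correction in $\log(k/e)$ and the extra $\log N$ introduced by the union bound, uniformly in $N$. This is routine, but it is exactly where the specific choice of $3\log N/\log\log N$ (rather than, say, $\log N/\log\log N$) is used.
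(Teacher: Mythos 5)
Your proof is correct and is essentially the paper's approach made explicit: the paper simply cites the standard balls-in-bins maximum-load bound (Lemma 5.1 of Mitzenmacher--Upfal), and your union-bound-plus-binomial-tail argument, including the verification that the constant $3$ in $3\log N/\log\log N$ absorbs the union bound over $N$ bins, is precisely the standard proof of that cited lemma. The only cosmetic caveat is that your final estimate holds for sufficiently large $N$, which matches the form of the cited lemma and is the implicit reading of the claim.
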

\begin{proof}
	This follows from standard ``balls in bins'' arguments (see, e.g., Lemma 5.1 in~\cite{mitzenmacher2005probability}, for a proof).
\end{proof}

\section{An upper bound}

We show that allowing quantum queries gives a \emph{slight} advantage of classical queries for the RP-RF problem. The classical query complexity of solving the RP-RF problem with bias $\epsilon$ is $\Omega(\sqrt{\epsilon N})$ -- see~\cite{chang2008short} for a proof.

\begin{theorem}
	The quantum query complexity of solving the RP-RF problem on average with bias $\epsilon$ is $O(\sqrt{\epsilon} N^{1/3})$.
\end{theorem}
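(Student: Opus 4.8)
The plan is to reduce the RP-RF problem to collision detection and then invoke a Brassard--Høyer--Tapp style quantum collision finder, tuned so that its acceptance probabilities on $\dist_\func$ and $\dist_\perm$ differ by $\Theta(\epsilon)$. The starting observation is that a random permutation is injective and so has no collisions, whereas by the balls-in-bins heuristics underlying Claim~\ref{clm:good_profile_likely} a random function $f \sim \dist_\func$ has $\Theta(N)$ collision pairs with high probability (indeed a constant fraction of the domain has multiplicity at least $2$). Hence an algorithm that accepts exactly when it exhibits a genuine collision $x \neq y$ with $f(x) = f(y)$ accepts $\dist_\perm$ with probability $0$, so its bias equals its acceptance probability on $\dist_\func$. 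This one-sidedness is what lets us control the bias through a single quantity, and it is the feature I would exploit throughout.

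Concretely, I would first query $f$ on a uniformly random seed set $S \subseteq [N]$ of size $s$, recording the multiset $f(S)$ in a classical lookup table. I then run amplitude amplification on a uniform superposition of $x \in [N] \setminus S$, marking $x$ precisely when $f(x) \in f(S)$; each amplification step costs one query to $f$ together with a free table lookup, and any found candidate is checked with one extra query so that permutations are never accepted. Over $f \sim \dist_\func$, a balls-in-bins computation shows that the marked fraction concentrates around $p = \Theta(s/N)$, so after $k$ amplification steps the acceptance probability is $\Theta(k^2 s / N)$ (in the regime $k^2 s / N \leq 1$), while it is exactly $0$ on $\dist_\perm$. The distinguishing bias is therefore $\Theta(k^2 s / N)$.

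It then remains to choose $s$ and $k$. I would set the seed at the natural collision scale $s = \Theta(N^{1/3})$, at which a random function's collisions first become findable, making the marked fraction $p = \Theta(N^{-2/3})$; taking $k = \Theta(\sqrt{\epsilon}\, N^{1/3})$ amplification steps then yields acceptance probability, and hence bias, $\Theta(k^2 p) = \Theta(\epsilon)$, with the amplification phase contributing $O(\sqrt{\epsilon}\, N^{1/3})$ queries. One can equally phrase the parameter selection as balancing the seed-building cost against the number of amplification steps subject to the constraint $k^2 s / N = \epsilon$; getting the exponent in the final bound to come out correctly requires care in exactly how this trade-off is struck, and is where the precise dependence on $\epsilon$ is determined.

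The main obstacle I expect is the probabilistic bookkeeping of the collision statistics rather than the quantum mechanics, which is standard amplitude amplification. Specifically, I need that with high probability over both $f \sim \dist_\func$ and the random choice of $S$, the number of elements of $[N] \setminus S$ colliding with $f(S)$ is $\Theta(s)$, bounded away from $0$ and from values large enough to break the small-angle regime of amplification; conditioning correctly on the good-profile event of Claim~\ref{clm:good_profile_likely} and on the realized seed is the delicate part. A secondary technical point is that the marked fraction is itself an unknown random quantity, so I would either use an amplitude-estimation variant of amplification or argue that the concentration of $p$ around $\Theta(s/N)$ is tight enough that a fixed schedule of $k$ steps suffices; both routes rely on the same high-probability estimate of the collision count.
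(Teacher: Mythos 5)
Your proposal is essentially the paper's own proof: the paper likewise runs the Brassard--H{\o}yer--Tapp strategy with a random seed set of size $N^{1/3}$, marks $x$ when $f(x)$ lands in the stored table of seed values, and truncates Grover to $O(\sqrt{\epsilon}\,N^{1/3})$ iterations so the acceptance probability on a random function is $\Theta(\epsilon)$ while a permutation is never accepted. The trade-off you flag at the end is genuinely present in the paper as well --- there too the $O(N^{1/3})$ seed-building cost is simply not charged against the $\sqrt{\epsilon}$ factor --- and your explicit restriction of the search to $[N]\setminus S$ with a final verification query is, if anything, slightly more careful than the paper's version.
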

\begin{proof}
	We use the the collision-finding algorithm of Brassard, Hoyer, Tapp~\cite{brassard1997quantum}. The algorithm works as follows:
	\begin{itemize}
		\item Pick a set $T \subseteq [N]$ of size $k = N^{1/3}$ at random. Query the value of $f(x)$ for every $x \in T$, and store the values $(x,f(x))$ in a table $L$.
		\item If there exists distinct $x,y\in T$ such that $f(x) = f(y)$, output ``1''.
		\item Construct the oracle $H(x)$ such that $H(x) = 1$ iff there exists a $s \in T$ such that $f(x) = L(s)$, where $L(s)$ denotes the stored value $f(s)$ in $L$.
		\item Execute Grover's algorithm on oracle $H$ to find an $x\in [N]$ such that $H(x) = 1$. If there exists one, output ``1''. Otherwise, output ``0'' (for random permutation). 
	\end{itemize}
		
	Clearly, if the input function is a random permutation, then the algorithm will always output ``0''. Suppose that $f$ were drawn from $\dist_\func$. With high probability, there exists a set $S \subseteq [N]$ of size $\approx (1 - 1/e)N$ such that for every $x\in S$, there exists a $y\neq x$ in $S$ where $f(x) = f(y)$. 
	
	If there were a collision in $T$, we are done; computing the table $L$ requires $O(N^{1/3})$ queries to $f$. Otherwise, the number of $x\in [N]$ such that $H(x) = 1$ is, with high probability, approximately $|S| \cdot |T| /N = \Omega(|T|) = \Omega(N^{1/3})$. Thus, Grover's algorithm will find a collision with probability $\Omega(\epsilon)$ in $O(\sqrt{\epsilon} N^{1/3})$ queries~\cite{boyer1996tight}, and the collision-finding algorithm will output ``1''. 
\end{proof}

\section{An average-case to worst-case reduction}

We first show that any efficient algorithm $A$ that can solve the RP-RF problem on average can be transformed into an efficient algorithm $B$ that solves a restricted form of the RP-RF problem, but in the \emph{worst case}. This ``restriction'' is that $B$ is only guaranteed to distinguish between a random permutation, and a random function with a \emph{specific} collision profile. 

\begin{claim}[Average case to worst case]
\label{clm:avg_to_worst_case}
Let $A$ be a quantum algorithm that solves the RP-RF problem on average, with bias $\epsilon > 2/N$. Then there exists an a good collision profile $C$, and a quantum algorithm $B$ that makes $O(\frac{1}{\epsilon} \query(A))$ queries, such that
\begin{enumerate}
	\item For all $f\in C$, $\Pr(B(f) = 1) \geq 2/3$, and
	\item For all $f\in \perm$, $\Pr(B(f) = 1) \leq 1/3$.
\end{enumerate}
In this case, we say that the algorithm $B$ distinguishes between $C$ and $\perm$, with bounded error.
\end{claim}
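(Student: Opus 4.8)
The plan is to reduce the average-case guarantee to a single good profile, symmetrize it into a worst-case guarantee, and then amplify using quantum amplitude estimation. First I would normalize: replacing $A$ by the algorithm that negates its output (no extra queries) lets me assume $\Pr_{f\sim\dist_\func}(A(f)=1) - \Pr_{f\sim\dist_\perm}(A(f)=1) > \epsilon$. Writing $p_C := \Pr_{f\sim\dist_C}(A(f)=1)$, $w_C := \Pr_{f\sim\dist_\func}(\mathrm{Col}(f)=C)$, and $p_\perm := \Pr_{f\sim\dist_\perm}(A(f)=1)$, the fact that conditioning $\dist_\func$ on a profile yields exactly $\dist_C$ gives $\sum_C w_C(p_C - p_\perm) > \epsilon$. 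By Claim~\ref{clm:good_profile_likely} the non-good profiles carry total weight at most $1/N$ and hence contribute at most $1/N$ to this sum; since $\epsilon > 2/N$, the good profiles alone satisfy $\sum_{C\text{ good}} w_C(p_C - p_\perm) > \epsilon - 1/N > \epsilon/2$. As these weights sum to at most $1$, an averaging argument produces a single good profile $C$ with $p_C - p_\perm > \epsilon/2$.

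Next, to upgrade this average statement to one holding for \emph{every} $f\in C$, I would exploit that relabelling preserves collision profiles: for $\sigma,\tau\in S_N$ the map $h\mapsto \tau\circ h\circ\sigma$ sends $C$ to $C$ and $\perm$ to $\perm$, and the action of $S_N\times S_N$ is transitive on each profile class. Indeed, given $f,f'$ of equal profile, one pairs their fibres by size, picks $\sigma$ carrying each fibre of $f'$ onto the matched fibre of $f$, and picks $\tau$ matching the corresponding image points, so that $\tau\circ f\circ\sigma = f'$. Consequently, for any fixed $f\in C$ and uniform $(\sigma,\tau)$ the oracle $\tau\circ f\circ\sigma$ is distributed exactly as $\dist_C$, and any fixed permutation maps to a uniform element of $\perm$. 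I then define $B_0$ to sample $(\sigma,\tau)$ coherently in an ancilla and run $A$ on $\tau\circ h\circ\sigma$; a single query to this oracle costs one query to $h$ together with the fixed permutations, so $B_0$ makes $\query(A)$ queries and accepts every $f\in C$ with probability exactly $p_C$ and every permutation with probability exactly $p_\perm$.

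Finally, $B_0$ is a worst-case distinguisher whose gap is only $\epsilon/2$, and naive repetition would cost $\Theta(1/\epsilon^2)$ runs; to reach the claimed $O(\query(A)/\epsilon)$ bound I would amplify with quantum amplitude estimation instead of majority voting. Viewing $B_0$ as a unitary whose squared accept-amplitude is its acceptance probability, amplitude estimation returns an estimate $\tilde p$ with additive error at most $\epsilon/8$ and success probability $\ge 8/\pi^2$, using $O(1/\epsilon)$ calls to $B_0$ and $B_0^\dagger$. The algorithm $B$ runs this estimator and outputs $1$ iff $\tilde p > \theta := p_\perm + \epsilon/4$, a constant hardwired into $B$ (legitimate since $A$, and hence $p_\perm$ and $p_C$, is fixed). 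For $f\in C$ we have $p_C > \theta + \epsilon/4$, so $\tilde p > \theta$ with probability $\ge 8/\pi^2 > 2/3$; for a permutation $p_\perm = \theta - \epsilon/4$, so $\tilde p < \theta$ with probability $\ge 8/\pi^2$, i.e. $B$ accepts with probability $< 1/3$. The total cost is $O(1/\epsilon)\cdot\query(A)$.

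I expect the two delicate points to be the transitivity of the relabelling action on each collision-profile class — this is exactly what converts the average-case advantage into a worst-case one — and the insistence on amplitude estimation in the final step, since it is the quantum quadratic speedup in amplification that buys the $1/\epsilon$ rather than $1/\epsilon^2$ overhead, on which the $\epsilon$-dependence of the eventual lower bound rests.
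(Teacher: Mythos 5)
Your proposal is correct and takes essentially the same route as the paper: isolate a single good collision profile $C$ with advantage $\epsilon/2$ (your direct weighted decomposition $\sum_C w_C(p_C - p_\perm) > \epsilon$ plus averaging replaces the paper's Markov-style argument on the set $\mathcal{R}$, but the two are equivalent), symmetrize via the oracle $\tau \circ f \circ \sigma$ for random permutations to convert the average-case advantage into a worst-case one, and amplify with amplitude estimation to get the $O(\frac{1}{\epsilon}\query(A))$ bound. Your write-up in fact fills in details the paper leaves implicit, namely the transitivity of the $S_N \times S_N$ action on each profile class and the hardwired threshold $p_\perm + \epsilon/4$ in the amplitude-estimation step.
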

\begin{proof}
	Without loss of generality, assume that $\Pr_{f\sim \dist_\func} (A(f) = 1) - \Pr_{f \sim \dist_\perm} (A(f) = 1) > \epsilon$ (i.e. without the absolute value bars). We can rewrite this as
	$$
		\E_{R} \E_{f \sim \dist_R} [A(f)] - \E_{f \sim \dist_\perm} [A(f)] > \epsilon,
	$$
	where $R$ is a random collision profile, distributed so that sampling $R$, then sampling a function $f\sim \dist_R$ is equivalent to sampling $f$ from $\dist_\func$ directly. Let $p := \E_{R} \E_{f \sim \dist_R} [A(f)]$. Define $\mathcal{R}$ to be the set of collision profiles such that for all $R \in \mathcal{R}$, $\E_{f \sim \dist_R}[A(f)] \geq p - \epsilon/2$. Then, we have that
	$$
		\Pr(\mathcal{R}) \cdot 1 + (1 - \Pr(\mathcal{R})) \cdot (p - \epsilon/2) \geq p.
$$
	Since $p \geq \epsilon$, it must be that $\Pr(\mathcal{R}) \geq \epsilon/2$. By Claim~\ref{clm:good_profile_likely}, there must exist a profile $C\in\mathcal{R}$ that is good, because $\Pr_C(\text{$C \in \mathcal{R}$, $C$ is good}) \geq \Pr(\mathcal{R}) - \Pr_C(\text{$C$ is not good}) \geq \epsilon/2 - 1/N > 0$. Fix such a good $C$. Thus,
	$$
		\E_{f \sim \dist_C}[A(f)] - \E_{f\sim \dist_\perm}[A(f)] > \epsilon/2,
	$$
	where $\dist_C$ denotes the uniform distribution over functions with collision profile $C$. Now consider the following algorithm $B'$, which takes a function $f:[N] \to [N]$ as input:
	\begin{itemize}
		\item Pick random and independent permutations $\pi,\sigma$ on $[N]$.
		\item Simulate $A$, and whenever $A$ makes a query to the $x$th element of its input, for some $x\in [N]$, return $\pi(f(\sigma(x)))$.
		\item Return the output of $A$.
	\end{itemize}
	
	Suppose that the input to $B'$ were a function $f \in C$. Then, for random permutations $\pi,\sigma$, $\pi(f(\sigma))$ is uniformly distributed over $C$. This is because set of permutation pairs $(\pi,\sigma) \in S_N \times S_N$ that leave a function $f:[N] \to [N]$ invariant forms a subgroup. If $f \in \perm$, then $f$ is a uniformly random permutation. Thus, for all $f \in C$, $\Pr(B'(f) = 1) = \E_{f \in \dist_C}[A(f)]$, and for all $f \in \perm$, $\Pr(B'(f) = 1) = \E_{f \in \dist_\perm}[A(f)]$. Thus, $B'$ is an algorithm that makes $\query(A)$ queries and distinguishes between an arbitrary function $f$ in $C$ and an arbitrary permutation in $\perm$ with bias $\epsilon/2$. Using standard amplitude estimation techniques~\cite{boyer1996tight}, from $B'$ we obtain an algorithm $B$ that distinguishes between $C$ and $\perm$ in the worst case, with bounded error, making $O(\frac{1}{\epsilon} \query(A))$ queries.
\end{proof}

\section{The lower bound proof}

We now prove that any algorithm $B$ that distinguishes between $C$ and $\perm$ for some good collision profile $C$ must make a large number of queries. We do this by exhibiting $O(\log N)$ hybrid distributions $\hybrid_0,\ldots,\hybrid_r$ such that $\hybrid_0 = \dist_\perm$, $\hybrid_r = \dist_C$, and argue that distinguishing between any two adjacent hybrids requires large query complexity.

Let $C = \langle c_1,\ldots,c_N \rangle$. Partition $[N]$ as $\{1\} \cup I_{\mathrm{empty}} \cup I_{\mathrm{small}} \cup I_{\mathrm{large}}$, where 
\begin{enumerate}
	\item $I_{\mathrm{empty}} = \{i : c_i = 0,\, i > 1 \}$,
	\item $I_{\mathrm{small}} = \{i : c_i < N^d,\, i > 1 \}$, and
	\item $I_{\mathrm{large}} = \{i : c_i \geq N^d,\, i > 1\}$,
\end{enumerate}
for some parameter $d\in (0,1)$ that we will optimize later. Write $I_{\mathrm{large}} = \{i_1,\ldots,i_q\}$.  Let $r = q + 1$, and define the following sequence of collision profiles:
\begin{itemize}
	\item $H_0 = \langle N, 0, \ldots, 0\rangle$ (i.e., the collision profile of a permutation).
	\item For all $1 \leq j \leq q$, $H_j$ is the collision profile $\langle h_{j1},\ldots,h_{jN}\rangle$, where $h_{j1} = N - (c_{i_1} + \cdots + c_{i_j})$, $h_{ji_k} = c_{i_k}$ for all $k = 1,\ldots,j$, and otherwise $h_{ji} = 0$,
	\item $H_{q+1} = C$.
\end{itemize}

We then define the hybrid distribution $\hybrid_j$ to be the uniform distribution over $H_j$ for each $j$. Note that, since $C$ is good, $\maxload(C) = \max I_{\mathrm{small}} \cup I_{\mathrm{large}} < 3 \log N/\log \log N$. Thus, the number of hybrids is $q+2 = O(\log N)$.

In other words, starting with the permutation collision profile, we gradually ``morph'' the hybrid collision profiles into the final collision profile $C$. We do so by adding the $c_i$-to-$1$ parts (for large $c_i$), one by one, to the hybrid profiles, until we only have small $c_i$ left. We then add all the small $c_i$ together to the last hybrid profile at once.

\begin{claim}
\label{clm:hybrid}
	Let $B$ be a quantum algorithm that distinguishes between $C$ and $\perm$ with bounded error. Then there exists an $1 \leq j \leq q+1$, and a quantum algorithm $B'$ that makes $O\left(q \query(B) \right)$ queries, and distinguishes between $H_{j-1}$ and $H_j$ with bounded error.
\end{claim}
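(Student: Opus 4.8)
The plan is to run a standard hybrid argument along the chain $\hybrid_0 = \dist_\perm, \hybrid_1, \ldots, \hybrid_{q+1} = \dist_C$. First I would set $p_j := \E_{f \sim \hybrid_j}[B(f)]$ for $0 \le j \le q+1$. Since $B$ distinguishes $C$ from $\perm$ with bounded error, we have $p_0 = \E_{f \sim \dist_\perm}[B(f)] \le 1/3$ and $p_{q+1} = \E_{f \sim \dist_C}[B(f)] \ge 2/3$, so $p_{q+1} - p_0 \ge 1/3$. Writing this difference as the telescoping sum $\sum_{j=1}^{q+1}(p_j - p_{j-1}) \ge 1/3$ and applying the pigeonhole principle to the $q+1$ summands, there must exist an index $j$ with $p_j - p_{j-1} \ge \frac{1}{3(q+1)} = \Omega(1/q)$. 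Thus $B$ already distinguishes the adjacent hybrid distributions $\hybrid_{j-1}$ and $\hybrid_j$ with bias $\Omega(1/q)$ (on average over those distributions), using $\query(B)$ queries. Note that we only need the existence of such a $j$, not an explicit procedure to identify it.

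Second, I would upgrade this average-case distinguisher into a worst-case distinguisher between the profile classes $H_{j-1}$ and $H_j$, reusing the symmetrization trick from the proof of Claim~\ref{clm:avg_to_worst_case}. Given the input $f$, pick independent uniformly random permutations $\pi,\sigma$ on $[N]$ and run $B$ on the function $x \mapsto \pi(f(\sigma(x)))$. For any fixed $f$ with profile $H_j$, the conjugated function is uniformly distributed over $H_j$, so the resulting algorithm $B''$ accepts with probability exactly $p_j$, independent of which representative $f \in H_j$ we started from; likewise it accepts any $f \in H_{j-1}$ with probability exactly $p_{j-1}$. Hence $B''$ distinguishes an arbitrary $f \in H_j$ from an arbitrary $f \in H_{j-1}$ with bias $\Omega(1/q)$ in the \emph{worst case}, while still making only $\query(B)$ queries. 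Finally, applying amplitude-estimation amplification~\cite{boyer1996tight} boosts this $\Omega(1/q)$ bias to bounded error at a multiplicative cost of $O(q)$ queries, yielding the desired algorithm $B'$ with $\query(B') = O(q\,\query(B))$.

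The step I expect to be the main obstacle is precisely the passage from average-case to worst-case distinguishing. The telescoping argument only guarantees a distinguisher that succeeds when the input is drawn from the specific distributions $\hybrid_{j-1}$ and $\hybrid_j$, whereas to eventually invoke the Collision Problem lower bound (and Ambainis's adversary theorem) we need a distinguisher that succeeds on \emph{every} individual function of each profile. The symmetrization step bridges this gap, and it is essential that conjugation by a pair of permutations preserves the collision profile and acts transitively within each profile class, so that averaging over $(\pi,\sigma)$ exactly reproduces the uniform distribution over the class. The remaining bookkeeping --- checking that the bias survives amplification and tracking the $O(q)$ query overhead --- is routine.
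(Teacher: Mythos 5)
Your proposal is correct and matches the paper's proof: the paper likewise applies a telescoping/triangle-inequality argument to find adjacent hybrids $\hybrid_{j-1},\hybrid_j$ with bias $\Omega(1/q)$, and then invokes the same symmetrization-plus-amplification construction from Claim~\ref{clm:avg_to_worst_case} to obtain the worst-case bounded-error distinguisher $B'$ with $O(q\,\query(B))$ queries. Your writeup actually spells out the average-to-worst-case step in more detail than the paper, which simply cites the earlier proof.
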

\begin{proof}
	Since $B$ distinguishes between $C$ and $\perm$, $B$ also distinguishes between $\dist_C$ and $\dist_\perm$ in an average-case sense. Then, by the triangle inequality, we have that there exists an $1 \leq j \leq q+1$ such that
	$$
		\left  | \E_{f \sim \hybrid_{j-1}} [B(f)] - \E_{f\sim \hybrid_j} [B(f)] \right | \geq \Omega\left(\frac{1}{q} \right).
	$$
	In a similar way to the proof of Claim~\ref{clm:avg_to_worst_case}, we then obtain the desired algorithm $B'$.
\end{proof}

We divide the analysis into two cases: $j < q+1$ and $j = q+1$. 

\begin{lemma}
\label{lem:large}
	If $j < q+1$, then $\query(B') = \Omega \left( \left(\frac{N^{d}}{\log N} \right)^{1/3} \right)$.
\end{lemma}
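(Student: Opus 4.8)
The plan is to reduce the Collision Problem to the task of distinguishing $H_{j-1}$ from $H_j$, and then invoke the known $\Omega(\sqrt[3]{M/r})$ lower bound for distinguishing an $r$-to-$1$ function from a $1$-to-$1$ function on a domain of size $M$. The key observation is that the profiles $H_{j-1}$ and $H_j$ differ in exactly one block: there is a set of $c_{i_j}$ domain elements that is mapped $1$-to-$1$ under every $f \sim \hybrid_{j-1}$ but mapped with multiplicity $i_j$ under every $f \sim \hybrid_j$, while the rest of the profile (the remaining $1$-to-$1$ part and the multiplicity-$i_k$ parts for $k < j$) is identical in both. Distinguishing these two local behaviors is precisely an instance of the Collision Problem on a domain of size $c_{i_j}$ with collision parameter $r = i_j$.

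Concretely, I would fix a partition of $[N]$ into three regions: a block $D_{\mathrm{multi}}$ of size $\sum_{k<j} c_{i_k}$ on which $f$ realizes the multiplicity-$i_k$ parts for $k<j$, a block $D_{\mathrm{var}}$ of size $c_{i_j}$ reserved for the embedded collision instance, and the leftover $1$-to-$1$ block $D_{\mathrm{one}}$ of size $h_{j1} = N - \sum_{k\le j} c_{i_k}$. I would fix corresponding disjoint range blocks and hard-code $f$ on $D_{\mathrm{multi}} \cup D_{\mathrm{one}}$ to a specific function matching the common part of both profiles. Given a Collision Problem instance $g$ on $D_{\mathrm{var}}$ (promised either $1$-to-$1$ or $i_j$-to-$1$), I would set $f$ equal to $g$ on $D_{\mathrm{var}}$, mapping into the leftover range block $R_{\mathrm{var}}$. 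A short count shows $|R_{\mathrm{var}}| = c_{i_j} + \sum_{k<j} c_{i_k}(1 - 1/i_k) \geq c_{i_j}$, so there is enough range to accommodate the $1$-to-$1$ case of $g$. Then $f$ has profile $H_j$ exactly when $g$ is $i_j$-to-$1$ and profile $H_{j-1}$ exactly when $g$ is $1$-to-$1$. Since each query of $B'$ to $f$ on $D_{\mathrm{var}}$ is a single query to $g$ while queries on the hard-coded region cost nothing, running $B'$ on $f$ solves the Collision Problem using at most $\query(B')$ queries to $g$.

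Because $B'$ distinguishes $H_{j-1}$ from $H_j$ in the worst case, the reduction succeeds for every collision instance $g$, so the composed algorithm is a bounded-error worst-case Collision algorithm; the Collision Problem lower bound then forces $\query(B') = \Omega\big((c_{i_j}/i_j)^{1/3}\big)$. Finally, since $i_j \in I_{\mathrm{large}}$ we have $c_{i_j} \geq N^d$, and since $C$ is good we have $i_j < 3\log N/\log\log N = O(\log N)$, so $c_{i_j}/i_j = \Omega(N^d/\log N)$, giving $\query(B') = \Omega\big((N^d/\log N)^{1/3}\big)$. The main thing to get right is the range/domain bookkeeping: verifying that the hard-coded part together with the embedded instance realizes the hybrid profiles \emph{exactly}, and that $|R_{\mathrm{var}}|$ is large enough for the $1$-to-$1$ case. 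The worst-case guarantee on $B'$ is what makes this clean, since no averaging or randomized embedding is needed: we only require that each individual embedded $f$ lands in the correct profile.
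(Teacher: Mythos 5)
Your proposal is correct and follows essentially the same route as the paper: hard-code the multiplicity-$i_k$ parts ($k<j$) and the remaining $1$-to-$1$ part, embed the Collision Problem instance on a block of size $c_{i_j}$ so that the composed function lands exactly in $H_{j-1}$ or $H_j$ depending on whether the instance is $1$-to-$1$ or $i_j$-to-$1$, and invoke the Aaronson--Shi--Ambainis $\Omega\left((c_{i_j}/i_j)^{1/3}\right)$ bound together with $c_{i_j} \geq N^d$ and $i_j = O(\log N)$ from goodness of $C$. The only (cosmetic) difference is your range bookkeeping: the paper simply maps the embedded block into itself, whereas you route it through leftover range, and both are valid.
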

\begin{proof}
	Since $j < q+1$, the hybrid collision profiles $H_{j-1}$ and $H_{j}$ differ by one ``collision type'' $c_{i_j}$. That is, $H_j$ contains a $i_j$-to-$1$ part, and $H_{j-1}$ doesn't. We now reduce the $i_j$-to-$1$ collision problem to the problem of distinguishing between $H_{j-1}$ and $H_j$. We will ``embed'' a function that is either $c_{i_j}$-to-$1$ or $1$-to-$1$ into a function from either $H_j$ or $H_{j-1}$, respectively. 
	
	Define $g_k = c_{i_1} + \cdots + c_{i_k}$, for $k = 1,\ldots,j$. We will define $g_0 = 0$. Let $f$ be a function mapping $[c_{i_j}]$ to $[c_{i_j}]$. Now let $h_f:[N] \to [N]$ be defined as follows: 
	\begin{enumerate}
		\item For all $1 \leq k < j$, $h_f$ restricted to $\{g_{k-1}+1,\ldots,g_k\}$ is an arbitrarily chosen $i_k$-to-$1$ function with range that's a subset of $\{g_{k-1}+1,\ldots,g_k\}$.
		\item For all $x\in \{g_{j-1}+1,\ldots,g_j\}$, $h_f(x) = f(x) + g_{j-1}$.
		\item For all $x\in \{g_j + 1,\ldots,N\}$, $h_f(x) = x$.
	\end{enumerate}
	
	Consider the following algorithm $G$, on input $f:[c_{i_j}]\to [c_{i_j}]$: 
	\begin{itemize}
		\item Simulate algorithm $B'$ on input $h_f$.
		\item Return the output of $B'$.
	\end{itemize}
	
	Observe that $\query(G) = \query(B')$. Now, if $f$ were a $1$-to-$1$ function, then $h_f \in H_{j-1}$. Otherwise, if $f$ were a $i_j$-to-$1$ function, then $h_f \in H_j$. Since $B'$ distinguishes between $H_{j-1}$ and $H_j$, $G$ distinguishes between $i_j$-to-$1$ functions and $1$-to-$1$ functions. By the collision lower bound of Aaronson, Shi, and Ambainis~\cite{aaronson2004quantum,ambainis2005polynomial}, $G$ (and hence $B'$) must make $\Omega \left( (c_{i_j}/i_j )^{1/3} \right)$ queries. Since $j < q+1$, we have that $c_{i_j} \geq N^d$ and since $C$ is good we have $i_j < \log N$.	
\end{proof}

\begin{lemma}
\label{lem:small}
	If $j = q+1$, then $\query(B') = \Omega \left( \sqrt{\frac{N}{N^d \log N}} \right)$.
\end{lemma}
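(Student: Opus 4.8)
The plan is to invoke Ambainis's adversary theorem~\cite{ambainis2000quantum}. Since $j=q+1$, the profiles $H_q$ and $C$ are identical on the large-collision part and on the genuinely $1$-to-$1$ part, and differ only in the small-collision mass $M:=\sum_{i\in I_{\mathrm{small}}}c_i$: in $C$ this mass is arranged into collisions (of multiplicities $i\in I_{\mathrm{small}}$), whereas in $H_q$ the same elements are $1$-to-$1$. Because $c_i<N^d$ for every $i\in I_{\mathrm{small}}$ and $|I_{\mathrm{small}}|\le\maxload(C)<3\log N/\log\log N$, we get the crucial bound $M=O(N^d\log N)$. Intuitively, distinguishing $H_q$ from $C$ therefore amounts to detecting the presence of these few hidden collisions among the $N$ domain elements, a search-type task for which the adversary method is well suited.

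Concretely, I would take the positive instances $X$ to be all functions with profile $C$ and the negative instances $Y$ to be all functions with profile $H_q$; since $B'$ distinguishes the two classes in the worst case, any lower bound obtained from Ambainis's theorem applied to $X,Y$ bounds $\query(B')$ from below. I define the relation $R\subseteq X\times Y$ by putting $(f,g)\in R$ whenever $g$ is obtained from $f$ by dissolving the small-collision groups of $f$ into $1$-to-$1$ elements with fresh, distinct values, leaving the large-collision part and the $1$-to-$1$ part of $f$ untouched; thus $f$ and $g$ agree off the $M$ small-colliding positions of $f$. It then remains to compute the four regularity parameters of $R$, namely the numbers $m,m'$ of partners on each side and, for each fixed domain position $p$, the numbers $\ell,\ell'$ of partners that differ at $p$. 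A function of profile $C$ may place its small-colliding elements at any of $\Omega(N)$ domain positions, so for a fixed $g$ the related $f$'s disagree with $g$ at positions spread across a constant fraction of $[N]$; this yields $m'/\ell'=\Omega(N/M)$, while the analogous ratio on the $X$-side is $\Omega(1)$. Ambainis's theorem then gives $\query(B')=\Omega(\sqrt{m m'/(\ell\ell')})=\Omega(\sqrt{N/M})=\Omega(\sqrt{N/(N^d\log N)})$.

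The main obstacle is precisely this parameter computation, and in particular justifying that the numerator is $\Omega(N)$ rather than merely the size $N-g_q$ (where $g_q=\sum_{i\in I_{\mathrm{large}}}c_i$) of the region lying outside the large-collision part. The naive relation confines all disagreements to this region, which would only give $\Omega(\sqrt{(N-g_q)/M})$; to recover the full $N$ one must either allow the relation to relocate large-collision elements (paying a $\mathrm{polylog}$ overhead, which is consistent with the $\log N$ already present in the bound), or, more simply, use that the good profile $C$ produced by Claim~\ref{clm:avg_to_worst_case} is that of a typical random function, whose $1$-to-$1$ part $c_1$ is $\Theta(N)$, so that $N-g_q\ge c_1=\Omega(N)$. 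A secondary, purely bookkeeping, difficulty is checking that every $g$ related to an $f$ indeed has profile exactly $H_q$ after the value reassignment.
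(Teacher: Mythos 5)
You take the paper's route: Ambainis's adversary theorem applied to a relation pairing profile-$C$ functions with profile-$H_q$ functions, with the same bound $M = v < N^d\log N$ on the small-collision mass. But your concrete relation differs from the paper's at the decisive point, and the difference is exactly the obstacle you flag at the end. With your in-place relation ($f$ and $g$ agreeing off the small-colliding positions of $f$), the positions at which related pairs disagree are confined, on the $g$-side, to the multiplicity-$1$ part of $g$, which has size $N - g_q = c_1 + M$, not $\Omega(N)$ in general; so the ratio $m'/\ell' = \Omega(N/M)$ asserted in your middle paragraph is unjustified --- only $\Omega\left((c_1+M)/M\right)$ follows. Nothing in the paper's definition of a good profile rules out $c_1 = 0$ (a $2$-to-$1$ function has a good profile), in which case your relation yields only a trivial bound. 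So, as written, the parameter computation is wrong for your relation, and the proof stands or falls with one of the two repairs you sketch but do not carry out.

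Both repairs are viable, and your option (a) is essentially what the paper does: its relation, adapted from Lemma 16 of~\cite{aaronson2009need}, adds a swap step --- after replacing $f_1$ on the small-colliding set $S$ by fresh distinct values to obtain an intermediate $\tilde f \in H_q$, it exchanges values between each $x \in S$ and a distinct partner $y \in [N] - S$ (setting $f_2(y) = \tilde f(x)$ and $f_2(x) = \tilde f(y)$), so that disagreement positions spread over the whole domain rather than only over the dissolved block. Your option (b) also works and is arguably simpler, but it is not available for free: Claim~\ref{clm:avg_to_worst_case} guarantees only that $C$ is good in the paper's sense (a maxload bound), so you would need to strengthen the definition of good profiles to include $c_1 = \Omega(N)$ and correspondingly strengthen Claim~\ref{clm:good_profile_likely} (which is true with high probability, by standard concentration for balls-in-bins). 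With that in hand, your naive relation does deliver the bound: disagreements occur only at the $M$ positions of $S$, where $\ell \leq m$ but $\ell' = O\left(m' M/(c_1+M)\right) = O(m'M/N)$, whence $\query(B') = \Omega\left(\sqrt{mm'/(\ell\ell')}\right) = \Omega\left(\sqrt{N/M}\right) = \Omega\left(\sqrt{N/(N^d\log N)}\right)$. In sum: right theorem and right skeleton, but the crux of this lemma --- designing the relation and verifying the adversary parameters so that the numerator is genuinely $\Omega(N)$ --- is left as a choice between two unexecuted sketches, one of which is the paper's actual construction.
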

\begin{proof}
We adapt the proof of Lemma 16 from~\cite{aaronson2009need} (which is an application of Ambainis's adversary theorem) to this situation, to show that distinguishing between $H_q$ and $H_{q+1}$ requires at least $\Omega \left( \sqrt{\frac{N}{N^d \log N}} \right)$ queries.

Let $I_\mathrm{small} = \{w_1,\ldots,w_r\}$. These are all the integers $u$ such that the number of domain elements in $[N]$ that have multiplicity $u$ is less than $N^d$. By definition of $I_{\mathrm{small}}$, we obtain the bound $v := c_{w_1} + \cdots + c_{w_r} < \maxload(C) N^d < N^d \log N$. Define a relation $R \subseteq H_{q+1} \times H_{q}$, where $(f_1,f_2) \in R$ iff one can transform $f_1$ to $f_2$ in the following way:
\begin{enumerate}
	\item Let $S = \{x\in [N] :\text{ $x$ has multiplicity $w_s$ under $f_1$ for some $1 \leq s \leq r$} \}$.
	\item Define an intermediate function $\tilde{f}$ that is identical to $f_1$ except on $S$, $\tilde{f}(S) \subseteq f_1(S) \cup ([N] - f_1([N]))$, and $|\tilde{f}(S)| = |S|$ (i.e. $\tilde{f}$ is $1$-to-$1$ on $S$, or that $\tilde{f} \in H_q$).
	\item $f_2$ is equal to $\tilde{f}$, except for every $x\in S$, there exists a unique $y\in [N] - S$ such that $f_2(y) = \tilde{f}(x)$ and $f_2(x) = \tilde{f}(y)$.
\end{enumerate}
Equivalently, $(f_1,f_2)\in R$ iff there is a transformation from $f_2$ to $f_1$ by a similar procedure:
\begin{enumerate}
	\item Let $S \subseteq \{x \in [N]:\text{ $x$ has multiplicity $1$ under $f_2$}\}$ be such that $|S| = c_{w_1} + \cdots + c_{w_r} = v$.
	\item Define an intermediate function $\tilde{f}$ that is identical to $f_2$ except on $S$, $\tilde{f}(S) \subseteq f_2(S)$, and $\tilde{f} \in H_{q+1}$.
	\item $f_1$ is equal to $\tilde{f}$, except for every $x\in S$, there exists a unique $y \in [N] - S$ such that $f_1(y) = \tilde{f}(x)$ and $f_1(x) = \tilde{f}(y)$.
\end{enumerate}
By an analysis similar to that in~\cite{aaronson2009need}, using Ambainis's adversary theorem, we get that $\query(B') = \Omega \left( \sqrt{\frac{N}{v}} \right) = \Omega \left( \sqrt{\frac{N}{N^d \log N}} \right)$.
\end{proof}

\begin{theorem}
	Let $A$ be a quantum algorithm that solves the RP-RF problem on average, with bias $\eps > 2/N$. Then $\query(A) = \Omega \left(\frac{\epsilon  N^{1/5}}{\log^{3/2} N} \right)$.
\end{theorem}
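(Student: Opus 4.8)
The plan is to combine the four preceding results—the average-case-to-worst-case reduction (Claim~\ref{clm:avg_to_worst_case}), the hybrid argument (Claim~\ref{clm:hybrid}), and the two case lemmas (Lemma~\ref{lem:large} and Lemma~\ref{lem:small})—into a single chain of reductions, and then to choose the free parameter $d$ so as to balance the two competing lower bounds. No fundamentally new argument is needed; the theorem is obtained by bookkeeping the query overheads and optimizing $d$.

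First I would track how the query complexity accumulates as we descend from $A$ to $B'$. Given an algorithm $A$ solving the RP-RF problem on average with bias $\epsilon$, Claim~\ref{clm:avg_to_worst_case} produces a good profile $C$ and an algorithm $B$ distinguishing $C$ from $\perm$ with $\query(B) = O(\frac{1}{\epsilon}\query(A))$. Claim~\ref{clm:hybrid} then yields an index $j$ and an algorithm $B'$ distinguishing the adjacent hybrids $H_{j-1}$ and $H_j$ with $\query(B') = O(q\,\query(B))$. Since $C$ is good we have $q = O(\log N)$, so composing the two reductions gives $\query(B') = O\!\left(\frac{\log N}{\epsilon}\,\query(A)\right)$, or equivalently $\query(A) = \Omega\!\left(\frac{\epsilon}{\log N}\,\query(B')\right)$.

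Next I would lower bound $\query(B')$ using whichever case lemma applies to the index $j$ returned by the hybrid argument. Because we do not control which case occurs, we must take the smaller of the two bounds:
$$
\query(B') = \Omega\!\left(\min\!\left\{\left(\frac{N^d}{\log N}\right)^{1/3},\ \sqrt{\frac{N}{N^d \log N}}\right\}\right).
$$
The first bound scales as $N^{d/3}$ and increases in $d$, while the second scales as $N^{(1-d)/2}$ and decreases in $d$; this opposing behavior is the crux of the trade-off. Setting the $N$-exponents equal, $d/3 = (1-d)/2$, yields $d = 3/5$, at which both exponents equal $1/5$. Substituting $d = 3/5$, the large case becomes $\Omega(N^{1/5}/\log^{1/3} N)$ and the small case $\Omega(N^{1/5}/\log^{1/2} N)$, so the minimum is $\query(B') = \Omega\!\left(N^{1/5}/\log^{1/2} N\right)$.

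Finally I would feed this into the relation from the first step to conclude
$$
\query(A) = \Omega\!\left(\frac{\epsilon}{\log N}\cdot\frac{N^{1/5}}{\log^{1/2}N}\right) = \Omega\!\left(\frac{\epsilon\, N^{1/5}}{\log^{3/2} N}\right),
$$
as claimed. The hard analytic work lives entirely in the two case lemmas, so here the only things requiring care are the bookkeeping of polylogarithmic factors—in particular observing that the $\log^{1/2}N$ loss from the small case dominates the $\log^{1/3}N$ loss from the large case, and that the factor of $q = O(\log N)$ from the hybrid step is what produces the extra power of $\log N$—together with confirming that $d=3/5$ genuinely maximizes the minimum of the two bounds rather than some intermediate value.
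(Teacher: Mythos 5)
Your proposal is correct and matches the paper's proof essentially step for step: both fix $d = 3/5$ to balance the exponents $N^{d/3}$ and $N^{(1-d)/2}$, take the weaker bound $\query(B') = \Omega\left(N^{1/5}/\sqrt{\log N}\right)$, and then unwind the $O(q) = O(\log N)$ hybrid overhead and the $O(1/\epsilon)$ average-to-worst-case overhead to get $\query(A) = \Omega\left(\epsilon N^{1/5}/\log^{3/2} N\right)$. Your explicit use of the minimum over the two cases (since the index $j$ is not under our control) and the check that $d = 3/5$ optimizes it are just slightly more careful renderings of what the paper does implicitly.
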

\begin{proof}
	We balance the lower bounds from Lemmas~\ref{lem:large} and~\ref{lem:small} by setting $d = 3/5$, which gives us that $\query(B') = \Omega \left( N^{1/5} / \sqrt{\log N} \right)$. Then, since Claim~\ref{clm:hybrid} gives us that algorithm $\query(B') = O(q\query(B))$, $B$ must make $\Omega \left(\frac{N^{1/5}}{\log^{3/2} N} \right)$ queries (because $q = O(\log N)$), and Claim~\ref{clm:avg_to_worst_case} gives that algorithm $A$ must make $\Omega \left(\frac{\epsilon N^{1/5}}{\log^{3/2} N} \right)$ queries.
\end{proof}

%\section{Open Problems}
%
%The most obvious open problem is whether the $\Omega(N^{1/5}/\polylog N)$ lower bound is tight. We conjecture that the quantum query complexity of the RP-RF problem is actually $\Theta(N^{1/3})$ (up to polylogarithmic factors) -- matching the upper bound given here, and also matching the query complexity of the Collision Problem. The slackness of the lower bound comes entirely from the Lemma~\ref{lem:small}, where we have to distinguish between a function with collision profile $C$, and a function with an additional $\approx N^d$ $1$-to-$1$ component -- intuitively, this shouldn't be easier than the Collision Problem! 
%
%Another open problem (and the original motivation for this work) is whether the Luby-Rackoff construction of pseudorandom permutations~\cite{luby1988construct} is secure against time-bounded quantum computers.

\textbf{Acknowledgments}. We thank Scott Aaronson for a number of insights and ideas for this problem. We also thank the anonymous referees for their useful comments, and for noticing an error in an earlier proof of Claim~\ref{clm:avg_to_worst_case}. 

\bibliographystyle{plain}
\bibliography{rf_rp_lower_bound}

\end{document}